\documentclass[12pt,halfline,a4paper]{ouparticle}
\usepackage{graphicx}
\usepackage{multirow}
\usepackage{array}
\usepackage{amsmath,amssymb,amsfonts}
\usepackage{amsthm}
\usepackage{mathrsfs}
\usepackage[title]{appendix}
\usepackage{xcolor}
\usepackage{textcomp}
\usepackage{manyfoot}
\usepackage{booktabs}
\usepackage{algorithm}
\usepackage{algorithmicx}
\usepackage{algpseudocode}
\usepackage{listings}
\newtheorem{theorem}{Theorem}

\newtheorem{definition}{Definition}
\raggedbottom
\begin{document}

\title{Topological analysis of entropy measure using regression models for diamond structure}

\author{%
\name{H. M. Nagesh$^{1}$, Girish V. R$^{2}$}
$^{1,2}$\address{Department Science and Humanities, PES University, \\ Bangalore, Karnataka, India.}
\email{nageshhm@pes.edu$^{1}$, girishvr1@pes.edu$^{2}$.}}

\abstract{A numerical parameter, referred to as a topological index, is used to represent the molecular structure of a compound by analyzing its graph-theoretical characteristics. Topological indices are predictive methods for the physicochemical properties of chemical compounds in the context of quantitative structure-activity relationship (QSAR) and quantitative structure-property relationship (QSPR) analysis. Graph entropies have evolved into information-theoretic tools for exploring the structural information of molecular graphs. In this paper, we compute the Nirmala index, as well as the first and second inverse Nirmala index, for the diamond structure using its M-polynomial. Furthermore, entropy measures based on Nirmala indices are derived for the diamond structure. The comparison of the Nirmala indices and corresponding entropy measures is presented through numerical computation and 2D line plots. A regression model is built to investigate the relationship between the Nirmala indices and corresponding entropy measures.}

\date{}

\keywords{Nirmala index; first inverse Nirmala index; second inverse Nirmala index; graph entropy; diamond structure.}

\maketitle
 \maketitle
\section{Introduction}
\label{sec1}
Given a simple, connected, and undirected graph with a non-empty vertex set $V(\Upsilon)$ and an edge set $E(\Upsilon)$, let $\Upsilon =(V(\Upsilon), E(\Upsilon))$ be an ordered pair. The total number of edges incident to a vertex $v \in V(\Upsilon)$ is its \emph{degree} and is denoted as $d_{\Upsilon}(v)$. For the edge $e=uv$, $u$ and $v$ are the end vertices of the edge $e \in E(\Upsilon)$. 
\newpage
The study and modeling of the structural characteristics of chemical compounds is the focus of the area of chemical graph theory. Chemical compounds are depicted here as graphs, with atoms serving as vertices and chemical bonds serving as edges between atoms. Molecular structures are mathematically investigated in this field of study using theoretical, computational, and graphical methods \cite{1}. Topological indices are mathematical parameters that are obtained from the molecular graph of a chemical compound. These indices are used for predicting the physical attributes, chemical composition, and biological activity of the substance. In the context of quantitative structure-property relationship (QSPR) and quantitative structure-activity relationship (QSAR) investigations, their significance is especially apparent \cite{2,3}. 

Given an edge set $E(\Upsilon)$ of a graph $\Upsilon$, the topological index based on degrees \cite{4} is given by
    \begin{equation*}
I(\Upsilon)= \displaystyle \sum_{uv \in E(\Upsilon)} f(d_{\Upsilon}(u), d_{\Upsilon}(v)),
    \end{equation*}
where $f(x, y)$ is a non-negative and symmetric function that depends on the mathematical formulation of the topological index.

Several topological indices have been reported in the literature and have shown benefits in several fields, including drug development, biology, chemistry, computer science, and physics. Proposed by H. Wiener in 1947, the Wiener index was the first and most studied topological index. One noteworthy application of it is the prediction of paraffin boiling temperatures \cite{5}. Another well-known degree-based topological index is the Randi\'c index first introduced by Milan Randi\'c in 1975. Its significance for drug development is widely used \cite{6}. For more information on topological indices and their applications, readers are referred to \cite{7,8}. 

Several attempts have been made to increase the prediction capability by adding new indices to the degree-based topological indices. Kulli in \cite{9} recently introduced a novel degree-based topological index of a molecular graph $\Upsilon$, called the \emph{Nirmala index}.

\begin{equation}
N(\Upsilon)= \displaystyle \sum_{uv \in E(\Upsilon)} \sqrt{d_{\Upsilon}(u)+d_{\Upsilon}(v)}
    \end{equation}

The first inverse Nirmala index $IN_{1}(\Upsilon)$ and second inverse Nirmala index $IN_{2}(\Upsilon)$ of a molecular graph $\Upsilon$ are defined as follows by Kulli \cite{10} later in 2021.

\begin{equation}
IN_{1}(\Upsilon)= \displaystyle \sum_{uv \in E(\Upsilon)} \sqrt{\frac{1}{d_{\Upsilon}(u)}+\frac{1}{d_{\Upsilon}(v)}}=\displaystyle \sum_{uv \in E(\Upsilon)} \left( \frac{1}{d_{\Upsilon}(u)}+\frac{1}{d_{\Upsilon}(v)} \right)^{\frac{1}{2}}
    \end{equation}
 \begin{equation}
IN_{2}(\Upsilon)= \displaystyle \sum_{uv \in E(\Upsilon)} \frac{1}{\sqrt{\frac{1}{d_{\Upsilon}(u)}+\frac{1}{d_{\Upsilon}(v)}}}=\displaystyle \sum_{uv \in E(\Upsilon)} \left( \frac{1}{d_{\Upsilon}(u)}+\frac{1}{d_{\Upsilon}(v)} \right)^{-\frac{1}{2}}
    \end{equation} 

    \newpage
In the past, topological indices were computed using the standard mathematical definition of a number of topological indices. There have been several attempts to look into a compact method that can recover a large number of topological indices for a particular class. In this regard, the concept of a general polynomial was developed, whose values of the necessary topological indices at a given point are produced by its derivatives, integrals, or a combination of both.  In this context, the idea of a general polynomial was found, whose derivatives, integrals, or a mix of both yield the values of the required topological indices at a given point. For example, the Hosoya polynomial \cite{11} recovers the distance-based topological indices, whereas the NM-polynomial \cite{12} generates the neighborhood degree sum-based topological indices. In 2015, Deutsch and Kla\v{z}ar \cite{13} introduced the notion of M-polynomial as a method for determining the degree-based topological index. 

\begin{definition} $(\cite{13})$
\emph{The M-polynomial of a graph $\Upsilon$ is defined as:
\begin{center}
    $M(\Upsilon; x,y)=\displaystyle \sum_{\delta \leq i \leq j \leq \Delta} m_{i,j}(\Upsilon)x^{i}y^{j}$,
\end{center}
where $\delta = min\{d_{\Upsilon}(u) | u \in V(\Upsilon)\}$, $\Delta = max\{d_{\Upsilon}(u) | u \in V(\Upsilon)\}$, and $m_{ij}$ is the number of edges $uv \in E(\Upsilon)$ such that $d_{\Upsilon}(u)=i, d_{\Upsilon}(v)=j \, (i,j \geq 1)$}.
\end{definition}

The M-polynomial-based derivation formulas to compute the different Nirmala indices are listed in Table 1.

\hspace{5mm}
\begin{center}
 \textbf{Table 1}. Relationship between the M-polynomial and Nirmala indices for a graph $\Upsilon$.      
\end{center}

\begin{table}[h!]
\centering
\renewcommand{\arraystretch}{4}
\begin{tabular}{||c| c |c| c ||} 
 \hline
 Sl. No & Topological Index & $f(x,y)$ & Derivation from $M(\Upsilon;x,y)$  
 \\ [0.5ex] 
 \hline\hline
 1  & Nirmala index (N) & $\sqrt(x+y)$ & $D_{x}^{\frac{1}{2}}J(M(\Upsilon; x,y))|_{x=1}$ \\
\hline
2  & First inverse Nirmala index $(IN_1)$ & $\sqrt(\frac{x+y}{xy})$ & $D_{x}^{\frac{1}{2}}JS_{y}^{\frac{1}{2}}S_{x}^{\frac{1}{2}}(M(\Upsilon; x,y))|_{x=1}$\\
\hline
 3  & Second inverse Nirmala index $(IN_2)$ & $\sqrt(\frac{xy}{x+y})$ & $S_{x}^{\frac{1}{2}}JD_{y}^{\frac{1}{2}}D_{x}^{\frac{1}{2}}(M(\Upsilon; x,y))|_{x=1}$  \\
 [1ex] 
 \hline
\end{tabular}
\end{table} 

\newpage
Here, $D_{x}^{\frac{1}{2}}(h(x,y))=\sqrt{x \cdot \frac{\partial(h(x,y))}{\partial x}} \cdot \sqrt{h(x,y)}$;  $D_{y}^{\frac{1}{2}}(h(x,y))=\sqrt{y \cdot \frac{\partial(h(x,y))}{\partial y}} \cdot \sqrt{h(x,y)}$; 
$S_{x}^{\frac{1}{2}}(h(x,y))=\sqrt{\displaystyle \int_{0}^{x} \frac{h(t,y)}{t} }dt \cdot \sqrt{h(x,y)}$;
$S_{y}^{\frac{1}{2}}(h(x,y))=\sqrt{\displaystyle \int_{0}^{y} \frac{h(x,t)}{t} }dt \cdot \sqrt{h(x,y)}$; and $J(h(x,y))=h(x,x)$ are the operators. 

For more information on degree-based topological indices using M-polynomial, we refer the readers to \cite{14,15,16,17,18,19,20}.

Shannon \cite{21} proposed the fundamental concept of entropy in 1948, characterizing it as a measurement of the uncertainty or unpredictability of the information contained in a system, represented by a probability distribution. The structural information of graphs, networks, and chemical structures was then analyzed using entropy. Graph entropies have become more useful in the recent several years in a variety of disciplines, including mathematics, computer science, biology, chemistry, sociology, and ecology. Graph entropy measures can be classified into different types such as intrinsic and extrinsic measures, and they correspond to the probability distribution with graph invariants (edges, vertices, etc.). 
For further information on degree-based graph entropy measures and their uses, readers are referred to \cite{22,23,24,25}.

\subsection{Entropy of a graph in terms of edge-weight}
 In 2014, Chen et al. \cite{26} proposed the concept of the entropy of edge-weighted graphs as follows. 
 
 Consider the edge-weight graph $\Upsilon = (V(\Upsilon), E(\Upsilon), \omega(uv))$, in which $E(\Upsilon)$ represents the set of edges, $V(\Upsilon)$ represents the set of vertices, and $\omega(uv)$ represents the weight of an edge $uv \in E(\Upsilon)$. Then, given an edge weight, the entropy of a graph is defined as follows: 
 
 \vspace{3mm} 
\begin{align*}
ENT_{\omega}(\Upsilon)=&-\displaystyle \sum_{u^{'}v^{'} \in E(\Upsilon)} \frac{\omega(u^{'}v^{'})}{\displaystyle \sum_{uv \in E(\Upsilon) } \omega(uv) } 
log \left(\frac{\omega(u^{'}v^{'})}{\displaystyle \sum_{uv \in E(\Upsilon) } \omega(uv) } \right) \\
= & -\displaystyle \sum_{u^{'}v^{'} \in E(\Upsilon)} \frac{\omega(u^{'}v^{'})}{\displaystyle \sum_{uv \in E(\Upsilon) } \omega(uv) } 
\left[log(\omega(u^{'}v^{'})) -log \left(\displaystyle \sum_{uv \in E(\Upsilon)} \omega(uv) \right) \right] \\
=& log \left(\displaystyle \sum_{uv \in E(\Upsilon)} \omega(uv) \right) -\displaystyle \sum_{u^{'}v^{'} \in E(\Upsilon)} \frac{\omega(u^{'}v^{'})}{\displaystyle \sum_{uv \in E(\Upsilon) } \omega(uv) } 
log(\omega(u^{'}v^{'})) 
\end{align*}
Hence,
\begin{align}
ENT_{\omega}(\Upsilon)=& log \left(\displaystyle \sum_{uv \in E(\Upsilon)} \omega(uv) \right) - \frac{1}{\left(\displaystyle \sum_{uv \in E(\Upsilon) } \omega(uv) \right)}              \displaystyle \sum_{u^{'}v^{'} \in E(\Upsilon)} \omega(u^{'}v^{'})  log(\omega(u^{'}v^{'}))
\end{align} 

Recently in \cite{27}, Virendra Kumar et al. introduced the concept of the Nirmala indices-based entropy. To study this, they examined the relevant information function $\omega$ as a function related to the equations (1-3) that define the Nirmala indices.
 \\\\
\textbf{Nirmala entropy}: Let $\omega(uv)=\sqrt{d_{\Upsilon}(u)+d_{\Upsilon}(v)}$. Then from the definition of the Nirmala index as given in Equation (1), we have
\begin{center}
$ \displaystyle \sum_{uv \in E(\Upsilon)}\omega(uv)=\displaystyle \sum_{uv \in E(\Upsilon)} \sqrt{d_{\Upsilon}(u)+d_{\Upsilon}(v)}=N(\Upsilon)$  
\end{center}
Hence, using equation (4), the Nirmala entropy of a graph $\Upsilon$ is given by 
\begin{equation}
 ENT_{N}(\Upsilon)=log(N(\Upsilon))  - \frac{1}{N(\Upsilon)}              \displaystyle \sum_{uv \in E(\Upsilon)} \sqrt{d_{\Upsilon}(u)+d_{\Upsilon}(v)} \times log (\sqrt{d_{\Upsilon}(u)+d_{\Upsilon}(v)})
\end{equation} \\\\
\textbf{First inverse Nirmala entropy}: Let $\omega(uv)=\sqrt{\frac{1}{d_{\Upsilon}(u)}+\frac{1}{d_{\Upsilon}(v)}}$. Then from the definition of the first inverse Nirmala index as given in Equation (2), we have,
\begin{center}
$ \displaystyle \sum_{uv \in E(\Upsilon)}\omega(uv)=\displaystyle \sum_{uv \in E(\Upsilon)} \sqrt{\frac{1}{d_{\Upsilon}(u)}+\frac{1}{d_{\Upsilon}(v)}}=IN_{1}(\Upsilon)$  
\end{center}
Hence, using equation (4), the first inverse Nirmala
entropy of a graph $\Upsilon$ is given by 
\begin{equation}
 ENT_{IN_{1}}(\Upsilon)=log(IN_{1}(\Upsilon))  - \frac{1}{IN_{1}(\Upsilon)}              \displaystyle \sum_{uv \in E(\Upsilon)} \sqrt{\frac{1}{d_{\Upsilon}(u)}+\frac{1}{d_{\Upsilon}(v)}} \times log \left( \sqrt{\frac{1}{d_{\Upsilon}(u)}+\frac{1}{d_{\Upsilon}(v)}} \right) 
\end{equation}
\textbf{Second inverse Nirmala entropy}: Let $\omega(uv)=\frac{\sqrt{d_{\Upsilon}(u) \cdot  d_{\Upsilon}(v)} }{\sqrt{d_{\Upsilon}(u) + d_{\Upsilon}(v)} }$.
Then from the definition of the second inverse Nirmala index as given in Equation (3), we have,
\begin{center}
$ \displaystyle \sum_{uv \in E(\Upsilon)}\omega(uv)=\displaystyle \sum_{uv \in E(\Upsilon)} \frac{\sqrt{d_{\Upsilon}(u) \cdot  d_{\Upsilon}(v)} }{\sqrt{d_{\Upsilon}(u) + d_{\Upsilon}(v)} }=IN_{2}(\Upsilon)$  
\end{center}
Hence, using equation (4), the second inverse Nirmala
entropy of a graph $\Upsilon$ is given by 
\begin{equation}
 ENT_{IN_{2}}(\Upsilon)=log(IN_{2}(\Upsilon))  - \frac{1}{IN_{2}(\Upsilon)}              \displaystyle \sum_{uv \in E(\Upsilon)} \frac{\sqrt{d_{\Upsilon}(u) \cdot  d_{\Upsilon}(v)} }{\sqrt{d_{\Upsilon}(u) + d_{\Upsilon}(v)} } \times log \left( \frac{\sqrt{d_{\Upsilon}(u) \cdot  d_{\Upsilon}(v)} }{\sqrt{d_{\Upsilon}(u) + d_{\Upsilon}(v)} } \right) 
 \end{equation}
 
 \subsection{Methodology}
In this paper, the tetrahedral structure of diamond is used. Further, we use Shannon's entropy measures by employing novel information functions derived from various Nirmala index definitions. We conducted a comprehensive mathematical and computational exploration of these measures within the tetrahedral structure of a diamond. The rest of the paper is organized as follows: Section 2 discusses the crystallographic structure of diamonds. In Section 3, we compute the Nirmala indices of diamond structure using its M-polynomial, which allows us to calculate the Nirmala indices-based entropy measures. Section 4 compares the Nirmala indices and their associated entropy measures through numerical data and a 2D line plot. Section 5 deals with a regression model to illustrate how the estimated Nirmala indices and associated entropy values fit the curve. Lastly, a discussion and a conclusion are presented in sections 5 and 6, respectively.  
 
\section{Crystallographic structure of diamond}
Diamond, a metastable form of carbon, exhibits a covalent bonding pattern where each carbon atom forms bonds with four surrounding carbon atoms. These atoms are arranged in a diamond lattice, a variation of the face-centered cubic crystal structure. Because of their extraordinary hardness, diamonds have many important industrial uses. The diamond's extraordinarily high refractive power gives it remarkable brightness. A well-cut diamond will reflect more light into the observer's eye than a gem with lower refractive power, giving it a more brilliant appearance. The high dispersion of diamonds originates from the dispersion of white light into the color of the spectrum as it passes through the stone.

Figure 1 depicts the sequential formation of the diamond's tetrahedral structure and Figure 2 depicts the tetrahedral structure of a diamond \cite{28}.
\hspace{5mm}
\begin{figure}[h!]
\centering
\includegraphics[width=130mm]{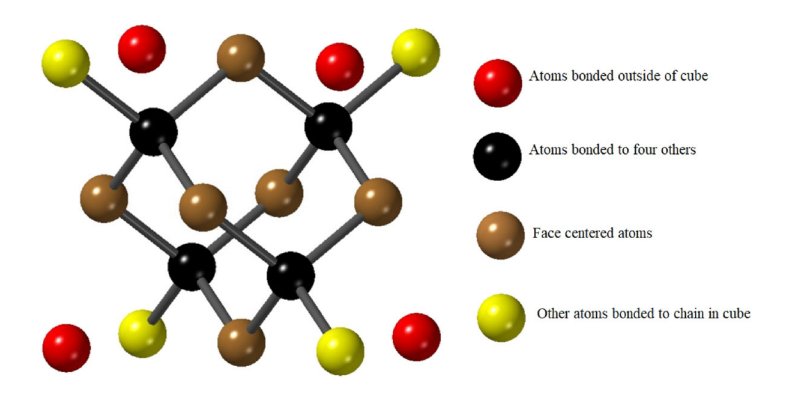}  
  \end{figure} 
  
\begin{center}
  \textbf{Figure 1}. Diamond structure, step by step.  
\end{center}

 \vspace{5mm}
\begin{figure}[h!]
\centering
\includegraphics[width=160mm]{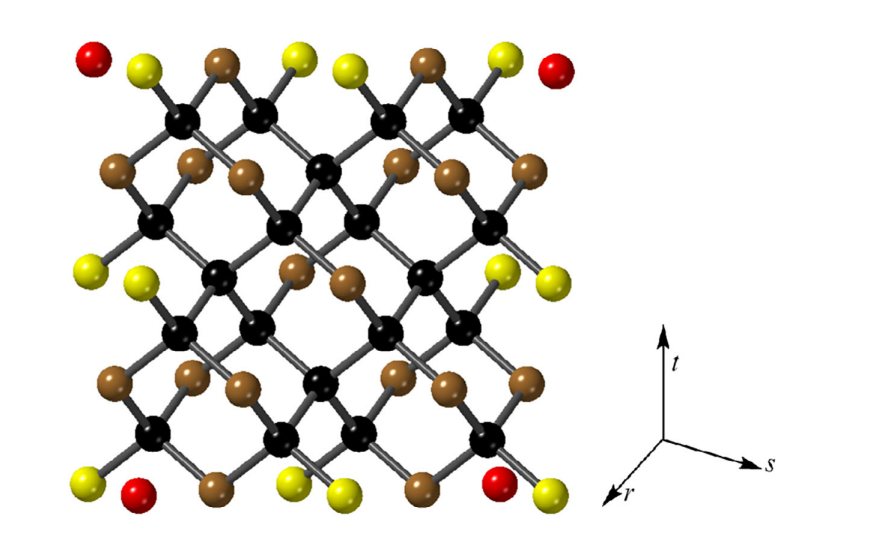}  
  \end{figure} 
\begin{center}
\textbf{Figure 2}. Diamond's tetrahedral structure. 
\end{center}

The authors in \cite{28} computed various degree-based topological indices, such as the first, second, modified, general, and inverse Randi\'c indices, along with the symmetric division index, harmonic index, and inverse sum index for the diamond network, using M-polynomial. They also compared these topological indices with corresponding entropy measures through numerical computations and 2D line plots. 

Motivated by the studies mentioned in \cite{27,28}, we use the M-polynomial to compute the Nirmala index for the diamond structure, along with the first and second inverse Nirmala indices. Moreover, Shannon's entropy model is used to calculate entropy measures for the diamond structure.

\section{Main results}

Throughout the paper, the term ``diamond structure'' refers to the ``tetrahedral structure of diamond'' unless specified otherwise.  

In this section, we first find the M-polynomial of diamond structure. Then, we compute the Nirmala index, and first and second inverse Nirmala indices using its M-polynomial. 

The order and size of the tetrahedral structure of a diamond are $\frac{2s^3+9s^2+13s+6}{6}$ and $\frac{2s^3+6s^2+4s}{3}$, respectively \cite{28}. 

\newpage

The edge set partitions of the diamond structure are given in Table 2.

\vspace{5mm}

\begin{table}[h!]
\centering
\renewcommand{\arraystretch}{2.8}
\begin{tabular}{||c| c |c| c ||} 
 \hline
 Sl. No & Edge set & $(d_{\Upsilon}(u),d_{\Upsilon}(v))$ & Number of repetitions  
 \\ [0.5ex] 
 \hline\hline
 1  & $E_1$  & (1,4) & 4\\
\hline
2  & $E_2$ & (2,4) & $12s-12$ \\
\hline
 3  & $E_3$ & (3,4) & $6s^2-18s+12$  \\
 \hline
 4  & $E_4$ & (4,4) & $\frac{2s^3-12s^2+22s-12}{3}$  \\
\hline 
\end{tabular}
\end{table} 
  
\textbf{Table 2}. Degree-based edge partition of diamond w.r.t end vertices of each edge. \\\\
From Table 2, one can easily observe that 
\begin{center}    
$\displaystyle \sum_{i=1}^{4}|E_{i}|=4+12(s-1)+(6s^2-18s+12)+\left(\frac{2s^3-12s^2+22s-12}{3}\right)=\frac{2s^3+6s^2+4s}{3}$.
\end{center}

\subsection{Nirmala indices of diamond structure}
We now find the M-polynomial of diamond structure as follows.

\begin{theorem}
Let $\Upsilon$ be the molecular graph of the diamond structure. Then the M-polynomial of $\Upsilon$ is \\
$M(\Upsilon; x,y)=4xy^4+(12s-12)x^2y^4+(6s^2-18s+12)x^3y^4+\left(\frac{2s^3-12s^2+22s-12}{3}\right)x^4y^4$.
\end{theorem}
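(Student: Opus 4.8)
The plan is to obtain $M(\Upsilon; x, y)$ by substituting the degree-based edge partition of Table 2 directly into Definition 1. Recall that $M(\Upsilon; x, y) = \sum_{\delta \le i \le j \le \Delta} m_{i,j}(\Upsilon)\, x^i y^j$, where $m_{i,j}(\Upsilon)$ is the number of edges $uv$ with $\{d_\Upsilon(u), d_\Upsilon(v)\} = \{i, j\}$. Hence the task reduces to identifying which pairs $(i,j)$ occur in the diamond structure and with what multiplicity, and then assembling the corresponding monomials.

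First I would record the structural fact underlying Table 2: in the tetrahedral diamond every carbon bonds to at most four carbons, so $\Delta = 4$, and each bond joins an ``inner'' atom of degree $4$ to an atom whose degree is $1$, $2$, $3$, or $4$ according to whether it sits at a corner, along an edge, on a face, or in the interior of the tetrahedron depicted in Figures 1 and 2. Thus the only admissible pairs are $(1,4)$, $(2,4)$, $(3,4)$, $(4,4)$, matching the four rows $E_1, E_2, E_3, E_4$ of Table 2, with counts $|E_1| = 4$, $|E_2| = 12s - 12$, $|E_3| = 6s^2 - 18s + 12$, and $|E_4| = \frac{2s^3 - 12s^2 + 22s - 12}{3}$. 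If one wanted a self-contained derivation rather than invoking \cite{28}, the main obstacle would be precisely these counts: one must induct on the stagewise construction of Figure 1, tracking how many corner-, edge-, face-, and interior-type bonds are created when passing from parameter $s-1$ to $s$, which yields the constant, linear, quadratic, and cubic growth above. A convenient consistency check is that $\sum_{i=1}^{4} |E_i|$ collapses to the known size $\frac{2s^3 + 6s^2 + 4s}{3}$ of the diamond structure, the identity already displayed after Table 2.

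Finally, feeding $m_{1,4} = 4$, $m_{2,4} = 12s - 12$, $m_{3,4} = 6s^2 - 18s + 12$, and $m_{4,4} = \frac{2s^3 - 12s^2 + 22s - 12}{3}$ into the definition gives
\[ M(\Upsilon; x, y) = 4xy^4 + (12s - 12)x^2 y^4 + (6s^2 - 18s + 12)x^3 y^4 + \left(\frac{2s^3 - 12s^2 + 22s - 12}{3}\right) x^4 y^4, \]
which is exactly the asserted polynomial. Apart from the edge-count bookkeeping, every step is a mechanical substitution, so I expect no further difficulty.
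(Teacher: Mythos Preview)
Your proposal is correct and follows essentially the same approach as the paper: both read off the degree-pair edge partition $(E_1,E_2,E_3,E_4)$ with the multiplicities of Table~2 (imported from \cite{28}) and then substitute $m_{1,4},m_{2,4},m_{3,4},m_{4,4}$ directly into the defining sum $M(\Upsilon;x,y)=\sum m_{i,j}x^iy^j$. Your write-up is in fact slightly more informative than the paper's, since you add the geometric reason only the pairs $(k,4)$ occur and the edge-count consistency check, but the logical skeleton is identical.
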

\begin{proof} Let $\Upsilon$ be the molecular graph of diamond structure. From Table 2, 
\begin{center} 
$\displaystyle \sum_{i=1}^{4}|E_{i}|=\frac{2s^3+6s^2+4s}{3}$ 
\end{center}

Since each vertex of $\Upsilon$ is of degree either $1$ or $2$ or $3$ or $4$, 
the partitions of edge set $E(\Upsilon)$ are: \\\\
$E_{1}(\Upsilon):=\{e=uv \in E(\Upsilon): d_{\Upsilon}(u)=1, d_{\Upsilon}(bv)=4 \}$;\\\\
$E_{2}(\Upsilon)):=\{e=uv \in E(\Upsilon): d_{\Upsilon}(u)=2, d_{\Upsilon}(v)=4 \}$;\\\\
$E_{3}(\Upsilon):=\{e=uv \in E(\Upsilon): d_{\Upsilon}(u)=3, d_{\Upsilon}(v)=4 \}$;\\\\
$E_{4}(\Upsilon):=\{e=uv \in E(\Upsilon): d_{\Upsilon}(u)=4, d_{\Upsilon}(v)=4 \}$.\\\\
Clearly, 
\begin{center} 
$|E_{1}(\Upsilon)|=4; |E_{2}(\Upsilon)|=12(s-1); |E_{3}(\Upsilon)|=6(s^2-3s+2); |E_{4}(\Upsilon)|=\frac{2s^3-12s^2+22s-12}{3}$. 
\end{center}
Therefore, 
\begin{align*}
M(\Upsilon; x,y)=&\displaystyle \sum_{\delta \leq i \leq j \leq \Delta} m_{i,j}(\Upsilon)x^{i}y^{j} = m_{14}(\Upsilon)xy^{4}+m_{24} (\Upsilon)x^{2}y^{4}+m_{34}(\Upsilon)x^{3}y^{4} + m_{44}(\Upsilon)x^{4}y^{4} \\ 
 = & 4xy^4+(12s-12)x^2y^4+(6s^2-18s+12)x^3y^4+\left(\frac{2s^3-12s^2+22s-12}{3}\right)x^4y^4.
 \end{align*}
 
 \end{proof}
Now we evaluate the Nirmala indices of diamond structure with the help of its M-polynomial.

\begin{theorem}
Let $\Upsilon$ be the molecular graph for diamond structure. Then the Nirmala indices of $\Upsilon$ are:\\\\
a. $N(\Upsilon)=4\sqrt{5}+\left(12s-12\right)\sqrt{6}+\left(6s^2-18s+12\right)\sqrt{7}+\left(\frac{2s^3-12s^2+22s-12}{3}\right)\sqrt{8}$,\\\\
b. $IN_{1}(\Upsilon)=\sqrt{20}+6\sqrt{3}(s-1)+\sqrt{21}(s^2-3s+2)+\frac{\sqrt{8}}{6}\left(s^3-6s^2+11s-6\right)$,\\\\
c. $IN_{2}(\Upsilon)=\frac{8}{\sqrt{5}}+\frac{24}{\sqrt{3}}(s-1)+\frac{12\sqrt{3}}{\sqrt{7}}(s^2-3s+2)+\frac{\sqrt{8}}{3}\left(s^3-6s^2+11s-6\right)$.
\end{theorem}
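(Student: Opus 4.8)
The plan is to apply each of the three M-polynomial operator formulas from Table 1 directly to the M-polynomial $M(\Upsilon;x,y)$ established in the previous theorem, and then evaluate at $x=1$ (and, as the definitions make clear, also implicitly setting $y=1$ after the $J$ operator collapses the two variables). Since $M(\Upsilon;x,y)$ is a sum of four monomials $m_{ij}\,x^iy^j$ with $(i,j)\in\{(1,4),(2,4),(3,4),(4,4)\}$, linearity of all the operators means it suffices to track what each operator chain does to a single monomial $x^iy^j$, then re-assemble. I would set up a short lemma-style computation: for a monomial $cx^iy^j$, the chain $D_x^{1/2}J(\cdot)$ produces $c\sqrt{i+j}\,x^{i+j}$, so that at $x=1$ each edge class contributes $c\sqrt{i+j}$ — which is exactly the Nirmala summand $f(i,j)=\sqrt{i+j}$. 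Plugging in the four pairs gives $\sqrt{5},\sqrt{6},\sqrt{7},\sqrt{8}$ with the stated multiplicities, yielding part (a) immediately.

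For part (b), the operator chain is $D_x^{1/2}\,J\,S_y^{1/2}\,S_x^{1/2}$. Applied to $cx^iy^j$: $S_x^{1/2}$ sends it to $c\,i^{-1/2}x^iy^j$, then $S_y^{1/2}$ sends that to $c\,(ij)^{-1/2}x^iy^j$, then $J$ collapses to $c\,(ij)^{-1/2}x^{i+j}$, and finally $D_x^{1/2}$ multiplies by $\sqrt{i+j}$ to give $c\sqrt{(i+j)/(ij)}\,x^{i+j}$; at $x=1$ this is $c\sqrt{1/i+1/j}$, matching $f(x,y)=\sqrt{(x+y)/(xy)}$ in Table 1. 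Summing over the four edge classes gives $\sqrt{1/1+1/4}=\sqrt{5/4}=\sqrt{20}/2$ — here I should double-check the claimed coefficient, since $4\cdot\sqrt{5}/2=2\sqrt5=\sqrt{20}$, consistent with the statement — together with $12(s-1)\sqrt{1/2+1/4}=12(s-1)\sqrt{3}/2=6\sqrt3(s-1)$, then $6(s^2-3s+2)\sqrt{1/3+1/4}=6(s^2-3s+2)\sqrt{7}/\sqrt{12}=\sqrt{21}(s^2-3s+2)$, and $\frac{2s^3-12s^2+22s-12}{3}\cdot\sqrt{1/4+1/4}=\frac{2(s^3-6s^2+11s-6)}{3}\cdot\frac{1}{\sqrt2}=\frac{\sqrt8}{6}(s^3-6s^2+11s-6)$. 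These four terms are exactly part (b).

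For part (c), the chain is $S_x^{1/2}\,J\,D_y^{1/2}\,D_x^{1/2}$. On $cx^iy^j$: $D_x^{1/2}$ gives $c\sqrt{i}\,x^iy^j$, $D_y^{1/2}$ gives $c\sqrt{ij}\,x^iy^j$, $J$ collapses to $c\sqrt{ij}\,x^{i+j}$, and $S_x^{1/2}$ divides by $\sqrt{i+j}$ to give $c\sqrt{ij/(i+j)}\,x^{i+j}$; at $x=1$ this is the second-inverse-Nirmala summand $\sqrt{ij/(i+j)}$. The four contributions are $4\sqrt{4/5}=8/\sqrt5$, $12(s-1)\sqrt{8/6}=12(s-1)\cdot 2\sqrt2/\sqrt6=24(s-1)/\sqrt3$ (using $2\sqrt2/\sqrt6=2/\sqrt3$), $6(s^2-3s+2)\sqrt{12/7}=6(s^2-3s+2)\cdot2\sqrt3/\sqrt7=\frac{12\sqrt3}{\sqrt7}(s^2-3s+2)$, and $\frac{2(s^3-6s^2+11s-6)}{3}\sqrt{16/8}=\frac{2(s^3-6s^2+11s-6)}{3}\cdot\sqrt2=\frac{2\sqrt2}{3}(s^3-6s^2+11s-6)=\frac{\sqrt8}{3}(s^3-6s^2+11s-6)$, matching part (c).

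The computations are entirely routine once the operator action on a single monomial is recorded, so there is no real mathematical obstacle; the only thing that needs care is the arithmetic of simplifying the surds consistently with the form the authors chose (e.g. writing $2\sqrt2$ as $\sqrt8$, $\sqrt{5/4}$-type expressions as $\sqrt{20}/2$, and reducing $2\sqrt2/\sqrt6$ to $2/\sqrt3$), and confirming that the fractional-operator definitions in the excerpt — which combine a "half-derivative/integral" factor with a $\sqrt{h(x,y)}$ factor — do indeed reproduce the clean monomial rules above when $h$ is a single monomial. I would therefore structure the proof as: (i) record the three monomial rules with a one-line justification each from the operator definitions, (ii) invoke linearity, (iii) substitute the four $(i,j)$ pairs and their multiplicities from Table 2, and (iv) simplify, presenting the result in the three displayed forms (a), (b), (c).
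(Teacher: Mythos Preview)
Your proposal is correct and follows essentially the same approach as the paper: apply the three operator chains from Table~1 to the M-polynomial of Theorem~1 term by term and evaluate at $x=1$. The only cosmetic difference is that you first record the effect of each chain on a generic monomial $cx^iy^j$ and then invoke linearity, whereas the paper carries all four terms of $M(\Upsilon;x,y)$ through each operator step explicitly; the underlying computation and the surd simplifications are identical.
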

\begin{proof} Let $\Upsilon$ be the molecular graph for diamond structure. From Theorem 1, the M-polynomial of $\Upsilon$ is
\begin{center}
$M(\Upsilon; x,y)=4xy^4+(12s-12)x^2y^4+(6s^2-18s+12)x^3y^4+\left(\frac{2s^3-12s^2+22s-12}{3}\right)x^4y^4$.
\end{center} 
From Table 1, we have 
\begin{align*}
& (i) \, D_{x}^{\frac{1}{2}}J(M(\Upsilon; x,y)) \\
& = D_{x}^{\frac{1}{2}}J \left[4xy^4+(12s-12)x^2y^4+(6s^2-18s+12)x^3y^4+\left(\frac{2s^3-12s^2+22s-12}{3}\right)x^4y^4 \right] \\
& = D_{x}^{\frac{1}{2}} \left[4x^5+(12s-12)x^6+(6s^2-18s+12)x^7+\left(\frac{2s^3-12s^2+22s-12}{3}\right)x^8 \right] \\
& = 4\sqrt{5}x^5+(12s-12)\sqrt{6}x^6+(6s^2-18s+12)\sqrt{7}x^7+\left(\frac{2s^3-12s^2+22s-12}{3}\right)\sqrt{8}x^8.
\end{align*}
\begin{align*}
& (ii) \, D_{x}^{\frac{1}{2}}JS_{y}^{\frac{1}{2}}S_{x}^{\frac{1}{2}}(M(\Upsilon; x,y)) \\
=&D_{x}^{\frac{1}{2}}JS_{y}^{\frac{1}{2}}S_{x}^{\frac{1}{2}} \left[4xy^4+(12s-12)x^2y^4+(6s^2-18s+12)x^3y^4+\left(\frac{2s^3-12s^2+22s-12}{3}\right)x^4y^4 \right] \\
= & D_{x}^{\frac{1}{2}}JS_{y}^{\frac{1}{2}} \left[4xy^4+\frac{1}{\sqrt{2}}(12s-12)x^2y^4+\frac{1}{\sqrt{3}}(6s^2-18s+12)x^3y^4+\frac{1}{2}\left(\frac{2s^3-12s^2+22s-12}{3}\right)x^4y^4 \right] \\
& = D_{x}^{\frac{1}{2}}J \left[\sqrt{4}xy^4+\frac{6}{\sqrt{2}}
(s-1)x^2y^4+\frac{1}{2\sqrt{3}}(6s^2-18s+12)x^3y^4+\frac{1}{4}\left(\frac{2s^3-12s^2+22s-12}{3}\right)x^4y^4 \right] \\
& = D_{x}^{\frac{1}{2}}\left[\sqrt{4}x^5+\frac{6}{\sqrt{2}}(s-1)x^6+\frac{1}{2\sqrt{3}}(6s^2-18s+12)x^7+\frac{1}{4}\left(\frac{2s^3-12s^2+22s-12}{3}\right)x^8 \right] \\
& = D_{x}^{\frac{1}{2}}\left[\sqrt{4}x^5+\frac{6}{\sqrt{2}}(s-1)x^6+\sqrt{3}(s^2-3s+2)x^7+\frac{1}{6}\left(s^3-6s^2+11s-6\right)x^8 \right] \\
& = \sqrt{20}x^5+6\sqrt{3}(s-1)x^6+\sqrt{21}(s^2-3s+2)x^7+\frac{\sqrt{8}}{6}\left(s^3-6s^2+11s-6\right)x^8.
\end{align*}

\begin{align*}
(iii) \, & S_{x}^{\frac{1}{2}}JD_{y}^{\frac{1}{2}}D_{x}^{\frac{1}{2}}(M(\Upsilon; x,y))  \\
&=S_{x}^{\frac{1}{2}}JD_{y}^{\frac{1}{2}}D_{x}^{\frac{1}{2}} \left[ 4xy^4+(12s-12)x^2y^4+(6s^2-18s+12)x^3y^4+\left(\frac{2s^3-12s^2+22s-12}{3}\right)x^4y^4\right] \\
&=S_{x}^{\frac{1}{2}}JD_{y}^{\frac{1}{2}} \left[4xy^4+12\sqrt{2}(s-1)x^2y^4+6\sqrt{3}(s^2-3s+2)x^3y^4+2\left(\frac{2s^3-12s^2+22s-12}{3}\right)x^4y^4\right] \\
&=S_{x}^{\frac{1}{2}}J \left[8xy^4+24\sqrt{2}(s-1)x^2y^4+12\sqrt{3}(s^2-3s+2)x^3y^4+4\left(\frac{2s^3-12s^2+22s-12}{3}\right)x^4y^4 \right] \\
&=S_{x}^{\frac{1}{2}} \left[8x^5+24\sqrt{2}(s-1)x^6+12\sqrt{3}(s^2-3s+2)x^7+4\left(\frac{2s^3-12s^2+22s-12}{3}\right)x^8\right] \\
&= \frac{8}{\sqrt{5}}x^5+\frac{24\sqrt{2}}{\sqrt{6}}(s-1)x^6+\frac{12\sqrt{3}}{\sqrt{7}}(s^2-3s+2)x^7+\frac{4}{3\sqrt{8}}\left(2s^3-12s^2+22s-12\right)x^8\\
& =\frac{8}{\sqrt{5}}x^5+\frac{24}{\sqrt{3}}(s-1)x^6+\frac{12\sqrt{3}}{\sqrt{7}}(s^2-3s+2)x^7+\frac{\sqrt{8}}{3}\left(s^3-6s^2+11s-6\right)x^8. 
\end{align*}

Hence, the Nirmala indices of $\Upsilon$ are given by 

\begin{align*}
(a) \, N(\Upsilon)=& D_{x}^{\frac{1}{2}}J(M(\Upsilon; x,y))|_{x=1} \\
& = 4\sqrt{5}+\left(12s-12\right)\sqrt{6}+\left(6s^2-18s+12\right)\sqrt{7}+\left(\frac{2s^3-12s^2+22s-12}{3}\right)\sqrt{8}.
\end{align*} 
\begin{align*}
(b) \, IN_{1}(\Upsilon)=&D_{x}^{\frac{1}{2}}JS_{y}^{\frac{1}{2}}S_{x}^{\frac{1}{2}}(M(\Upsilon; x,y))|_{x=1}   \\
 & = \sqrt{20}+6\sqrt{3}(s-1)+\sqrt{21}(s^2-3s+2)+\frac{\sqrt{8}}{6}\left(s^3-6s^2+11s-6\right).
 \end{align*}
 \begin{align*}     
(c) \, IN_{2}(\Upsilon)=& S_{x}^{\frac{1}{2}}JD_{y}^{\frac{1}{2}}D_{x}^{\frac{1}{2}}(M(\Upsilon; x,y))|_{x=1}   \\
&= \frac{8}{\sqrt{5}}+\frac{24}{\sqrt{3}}(s-1)+\frac{12\sqrt{3}}{\sqrt{7}}(s^2-3s+2)+\frac{\sqrt{8}}{3}\left(s^3-6s^2+11s-6\right).
\end{align*} 
\end{proof}

\subsection{Entropy measures of diamond structure }
Using Shannon's entropy, we continue to compute graph entropy measures for the diamond structure. We use the previously determined expressions of the Nirmala indices to derive the mathematical formulations for entropy measures based on Nirmala indices.\\\\
\textbf{Nirmala entropy of diamond structure:}\\
From Theorem 2, the  Nirmala index of $\Upsilon$ is given by
\begin{equation*}
N(\Upsilon)=4\sqrt{5}+\left(12s-12\right)\sqrt{6}+\left(6s^2-18s+12\right)\sqrt{7}+\left(\frac{2s^3-12s^2+22s-12}{3}\right)\sqrt{8}.     
\end{equation*}

From Table 2 and Equation (5), we have
\begin{align*}
 ENT_{N}(\Upsilon)=&log(N(\Upsilon))  - \frac{1}{N(\Upsilon)}              \displaystyle \sum_{uv \in E(\Upsilon)} \sqrt{d_{\Upsilon}(u)+d_{\Upsilon}(v)} \times log (\sqrt{d_{\Upsilon}(u)+d_{\Upsilon}(v)}) \\
 = & log(N(\Upsilon))-\frac{1}{N(\Upsilon)} \left[\displaystyle \sum_{i=1}^{4} \displaystyle \sum_{uv \in E_{i}(\Upsilon)} \sqrt{d_{\Upsilon}(u)+d_{\Upsilon}(v)} \times log (\sqrt{d_{\Upsilon}(u)+d_{\Upsilon}(v)}) \right] \\
 & = log(N(\Upsilon))-\frac{1}{N(\Upsilon)} \left[4 \cdot \sqrt{5} \cdot log(\sqrt{5}) + 12(s-1) \cdot \sqrt{6} \cdot log(\sqrt{6}) \right] \\
 & -\frac{1}{N(\Upsilon)} \left[6(s^2-3s+2) \cdot \sqrt{7} \cdot log(\sqrt{7}) + \left(\frac{2s^3-12s^2+22s-12}{3}\right) \cdot \sqrt{8} \cdot log(\sqrt{8}) \right]  
  \end{align*}
Finally, we get the desired formulation of the Nirmala entropy for diamond structure by substituting the value of $N(\Upsilon)$ into the previous expression.\\\\
\textbf{First inverse Nirmala entropy of diamond structure:}\\
From Theorem 2, the first inverse Nirmala index of $\Upsilon$ is given by
\begin{align*}
IN_{1}(\Upsilon)=& \sqrt{20}+6\sqrt{3}(s-1)+\sqrt{21}(s^2-3s+2)+\frac{\sqrt{8}}{6}\left(s^3-6s^2+11s-6\right).    
\end{align*}
From Table 2 and Equation (6), we have
\begin{align*}
 ENT_{IN_{1}}(\Upsilon)=&log(IN_{1}(\Upsilon))  - \frac{1}{IN_{1}(\Upsilon)}              \displaystyle \sum_{uv \in E(\Upsilon)} \sqrt{\frac{1}{d_{\Upsilon}(u)}+\frac{1}{d_{\Upsilon}(v)}} \times log \left( \sqrt{\frac{1}{d_{\Upsilon}(u)}+\frac{1}{d_{\Upsilon}(v)}} \right) \\
 =& log(IN_{1}(\Upsilon)) - \frac{1}{IN_{1}(\Upsilon)} \left[\displaystyle \sum_{i=1}^{4} \displaystyle \sum_{uv \in E_{i}(\Upsilon)} \sqrt{\frac{1}{d_{\Upsilon}(u)}+\frac{1}{d_{\Upsilon}(v)}} \times log \left( \sqrt{\frac{1}{d_{\Upsilon}(u)}+\frac{1}{d_{\Upsilon}(v)}} \right) \right] \\
 = & log(IN_{1}(\Upsilon)) - \frac{1}{IN_{1}(\Upsilon)} \left[4 \cdot \sqrt{\frac{1}{4}+\frac{1}{4}} \cdot log\left(\sqrt{\frac{1}{4}+\frac{1}{4}}\right) \right] \\
 - & \frac{1}{IN_{1}(\Upsilon)} \left[12(s-1) \cdot \sqrt{\frac{2}{4}+\frac{2}{4}} \cdot log\left(\sqrt{\frac{2}{4}+\frac{2}{4}}\right) \right] \\
 - & \frac{1}{IN_{1}(\Upsilon)} \left[6(s^2-3s+2) \cdot \sqrt{\frac{3}{4}+\frac{3}{4}} \cdot log\left(\sqrt{\frac{3}{4}+\frac{3}{4}}\right) \right] \\
 - & \frac{1}{IN_{1}(\Upsilon)} \left[ \left(\frac{2s^3-12s^2+22s-12}{3}\right)  \cdot \sqrt{\frac{4}{4}+\frac{4}{4}} \cdot log\left(\sqrt{\frac{4}{4}+\frac{4}{4}}\right) \right] \\ 
 \end{align*}
 Since $log(1)$=0, 
 \begin{align*}
  ENT_{IN_{1}}(\Upsilon) & =log(IN_{1}(\Upsilon)) - \frac{1}{IN_{1}(\Upsilon)} \left[4 \cdot \sqrt{\frac{1}{2}} \cdot log\left(\sqrt{\frac{1}{2}}\right) \right] \\ 
 - & \frac{1}{IN_{1}(\Upsilon)} \left[6(s^2-3s+2) \cdot \sqrt{\frac{3}{2}} \cdot log\left(\sqrt{\frac{3}{2}}\right) + \left(\frac{2s^3-12s^2+22s-12}{3}\right)  \cdot \sqrt{2} \cdot log\left(\sqrt{2}\right) \right]  
 \end{align*}

Finally, by substituting the value of $IN_{1}(\Upsilon)$ into the preceding expression, we obtain the desired formulation of the first inverse Nirmala entropy for diamond structure.\\\\
\textbf{Second inverse Nirmala entropy of diamond structure:}\\
From Theorem 2, the second inverse Nirmala index of $\Upsilon$ is given by
\begin{align*}
IN_{2}(\Upsilon)= & \frac{8}{\sqrt{5}}+\frac{24}{\sqrt{3}}(s-1)+\frac{12\sqrt{3}}{\sqrt{7}}(s^2-3s+2)+\frac{\sqrt{8}}{3}\left(s^3-6s^2+11s-6\right).
\end{align*} 
From Table 2 and Equation (7), we have
\begin{align*}
 ENT_{IN_{2}}(\Upsilon) =  &  log(IN_{2}(\Upsilon))  - \frac{1}{IN_{2}(\Upsilon)}              \displaystyle \sum_{uv \in E(\Upsilon)} \frac{\sqrt{d_{\Upsilon}(u) \cdot  d_{\Upsilon}(v)} }{\sqrt{d_{\Upsilon}(u) + d_{\Upsilon}(v)} } \times log \left( \frac{\sqrt{d_{\Upsilon}(u) \cdot  d_{\Upsilon}(v)} }{\sqrt{d_{\Upsilon}(u) + d_{\Upsilon}(v)} } \right) \\
 = & log(IN_{2}(\Upsilon))  - \frac{1}{IN_{2}(\Upsilon)} \left[ \displaystyle \sum_{i=1}^{4} \displaystyle \sum_{uv \in E_{i}(\Upsilon)}              \frac{\sqrt{d_{\Upsilon}(u) \cdot  d_{\Upsilon}(v)} }{\sqrt{d_{\Upsilon}(u) + d_{\Upsilon}(v)} } \times log \left( \frac{\sqrt{d_{\Upsilon}(u) \cdot  d_{\Upsilon}(v)} }{\sqrt{d_{\Upsilon}(u) + d_{\Upsilon}(v)} } \right) \right] \\
 = & log(IN_{2}(\Upsilon))  - \frac{1}{IN_{2}(\Upsilon)} \left[ 4 \cdot \frac{\sqrt{4}}{\sqrt{5}} \cdot log  \left(\frac{\sqrt{4}}{\sqrt{5}} \right) + 12(s-1) \cdot \frac{\sqrt{8}}{\sqrt{6}} \cdot log  \left(\frac{\sqrt{8}}{\sqrt{6}} \right) \right]\\  
 - & \frac{1}{IN_{2}(\Upsilon)} \left[6(s^2-3s+2) \cdot \frac{\sqrt{12}}{\sqrt{7}} \cdot log \left(\frac{\sqrt{12}}{\sqrt{7}} \right) + \frac{2s^3-12s^2+22s-12}{3} \cdot \frac{\sqrt{16}}{\sqrt{8}} \cdot log \left(\frac{\sqrt{16}}{\sqrt{8}} \right) \right] \\
& = log(IN_{2}(\Upsilon))  - \frac{1}{IN_{2}(\Upsilon)} \left[  \frac{8}{\sqrt{5}} \cdot log  \left(\frac{2}{\sqrt{5}} \right) + 12(s-1) \cdot \frac{\sqrt{8}}{\sqrt{6}} \cdot log  \left(\frac{\sqrt{8}}{\sqrt{6}} \right) \right]\\  
 - & \frac{1}{IN_{2}(\Upsilon)} \left[6(s^2-3s+2) \cdot \frac{\sqrt{12}}{\sqrt{7}} \cdot log \left(\frac{\sqrt{12}}{\sqrt{7}} \right) + \left(\frac{2s^3-12s^2+22s-12}{3}\right) \cdot \sqrt{2} \cdot log(\sqrt{2})  \right]
\end{align*}
The second inverse Nirmala entropy for the diamond structure can finally be expressed as desired by substituting the value of $IN_{2}(\Upsilon)$ in the previous expression.

\section{Comparison through numerical and graphical demonstrations}
 Graph entropy metrics are widely used in many scientific domains, including computer science, information theory, chemistry, biological therapies, and pharmacology. Therefore, scientists working in these disciplines rely on numerical calculation and graphical representation to appropriately characterize these molecular characteristics. This section compares the Nirmala indices and the corresponding entropy measures using numerical computation and 2D line graphs. 
 
 Table 3 shows the results of the numerical calculation of the Nirmala indices and corresponding entropy measures for the diamond structure. Table 3 contains the intervals $1\leq s \leq 25$. Additionally, Figure 3 uses 2D line plots for $1\leq s \leq 25$ to compare the Nirmala indices and the corresponding entropy measures. 
\newpage

\begin{center} 
\begin{table}[h!]
\centering
\renewcommand{\arraystretch}{1.6}
\begin{tabular}{|>{\centering\arraybackslash}m{0.8cm}||>{\centering\arraybackslash}m{2.2cm} |>{\centering\arraybackslash}m{2.2cm} |>{\centering\arraybackslash}m{2.2cm} |>{\centering\arraybackslash}m{2cm} |>{\centering\arraybackslash}m{2cm} |>{\centering\arraybackslash}m{2cm} ||} 
 \hline
 [s] & N & $IN_{1}$ & $IN_{2}$ & $ENT_{N}$ &  $ENT_{IN_{1}}$  &   $ENT_{IN_{2}}$
 \\ [0.5ex] 
 \hline\hline
[1] & 8.94 & 4.47 & 3.57 & 1.38 & 1.71 & 1.38 \\
\hline
[2] & 38.33 & 14.46 & 17.43 & 2.77 & 2.76 & 2.76  \\ 
\hline
[3] &  99.48 & 34.42 & 47.00  & 3.68 & 3.48 & 3.68\\ 
 \hline 
[4] & 203.68 & 68.80 & 97.93 & 4.38 & 4.08 & 4.37\\ 
\hline 
[5] & 362.26 & 123.65 & 175.90 & 4.94 & 4.61 & 4.93 \\
\hline 
[6] & 586.54 & 204.65 & 286.54 & 5.41 & 5.08 & 5.40 \\
\hline
[7] & 887.81  & 317.44 & 435.52 & 5.81 & 5.49  & 5.81 \\
\hline
[8] & 1277.41 & 467.67 & 628.50 & 6.17 & 5.86 & 6.17 \\
\hline 
[9] & 1766.63 & 661.01 & 871.14 & 6.49 & 6.20 & 6.48\\
\hline  
[10] & 2366.80 & 903.12 & 1169.08 & 6.77 & 6.50 & 6.77\\
\hline
[11] & 3089.23 & 1199.64 & 1527.99 & 7.04 & 6.78 & 7.04 \\
\hline 
[12] & 3945.23 & 1556.25 & 1953.52 & 7.28 & 7.03 & 7.28 \\
\hline 
[13] & 4946.12 & 1978.58 & 2451.33 & 7.50 & 7.27 & 7.50 \\
\hline 
[14] & 6103.20  & 2472.31 & 3027.08  & 7.71 & 7.49 & 7.71 \\
\hline
[15] & 7427.80 & 3043.08 & 3686.42 & 7.90 & 7.69 & 7.90 \\
\hline
[16] & 8931.23 & 3696.56 & 4435.01 & 8.08 & 7.88 & 8.08 \\
\hline 
[17] & 10624.80 & 4438.40 & 5278.52 & 8.26 & 8.07 & 8.26 \\
\hline 
[18] & 12519.82 & 5274.26  & 6222.58  & 8.42 & 8.24 & 8.42 \\
\hline 
[19] & 14627.61  & 6209.79 & 7272.87 & 8.57 & 8.40 & 8.57 \\
\hline 
[20] & 16959.49 & 7250.65 & 8435.03 & 8.72 & 8.55 & 8.72 \\
\hline
[21] & 19526.76 & 8402.51 & 9714.74 & 8.86 & 8.70  & 8.86\\
\hline
[22] & 22340.74 & 9671.00  & 11117.63 & 8.99 & 8.84 & 8.99 \\
\hline
[23] & 25412.74 & 11061.80 & 12649.37 & 9.12 & 8.97 & 9.12  \\
\hline
[24] & 28754.08 & 12580.56 & 14315.62 & 9.24 & 9.10 & 9.24\\
\hline
[25] & 32376.07 & 14232.94 & 16122.03 & 9.36 & 9.22 & 9.36 \\
\hline
\end{tabular}
\end{table} 
\end{center}  
\textbf{Table 3}. The results of the numerical calculation of the Nirmala indices and corresponding entropy measures for the diamond structure, where $1 \leq s \leq 25$.

\newpage

 \vspace{5mm}
\begin{figure}[h!]
\centering
\includegraphics[width=145mm]{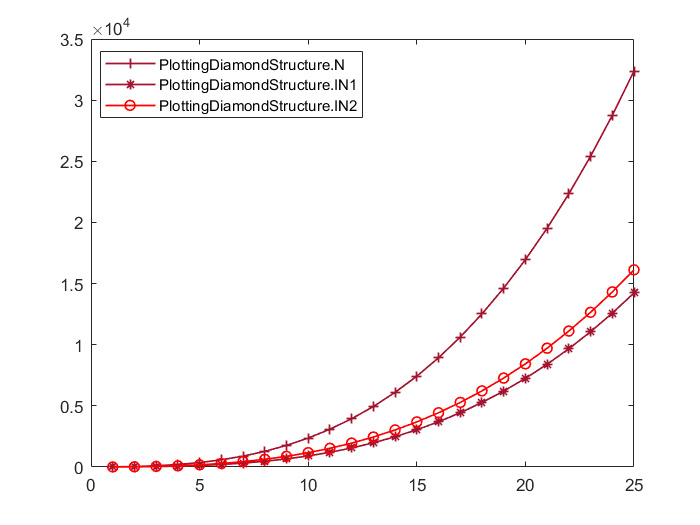}  
  \end{figure} 

\begin{figure}[h!]
\centering
\includegraphics[width=145mm]{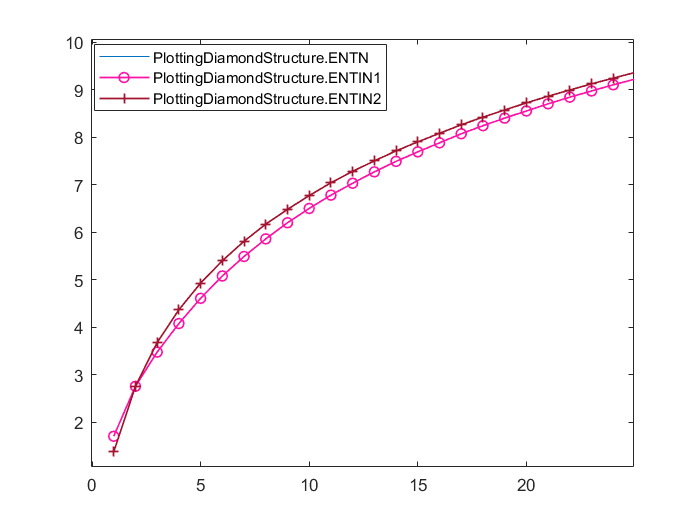}  
  \end{figure} 
\begin{center}
\textbf{Figure 3}. Comparison of the Nirmala indices and their associated entropy measures of diamond structure through a 2D line plot for $1 \leq s \leq 25$. 
\end{center}
\newpage
From Table 3 and Figure 3, the following two remarks are possible. \\
\textbf{Remark 1:}  The Nirmala indices and associated entropy measures of the diamond structure increase as the values of $s$ increase. \\
\textbf{Remark 2:} For the molecular graph $\Upsilon$ diamond structure, we have the following inequality relationships: \\\\
(i) $IN_2(\Upsilon) < IN_1(\Upsilon) < N(\Upsilon)$ for $s=1$; \\\\ 
(ii) $IN_1(\Upsilon) < IN_2(\Upsilon) < N(\Upsilon)$ for any $2 \leq s \leq 25$. \\\\
(iii) $ENT_{N}(\Upsilon) \approx ENT_{IN_{2}}(\Upsilon)$ for any $1 \leq s \leq 25$; \\\\
(iv) $ENT_{N}(\Upsilon) < ENT_{IN_{1}}(\Upsilon)$ and $ENT_{IN_{2}}(\Upsilon) <  ENT_{IN_{1}}(\Upsilon)$ for $s=1$; \\\\
(v) $ENT_{IN_{1}}(\Upsilon) < ENT_{N}(\Upsilon)$ and $ENT_{IN_{1}}(\Upsilon) <  ENT_{IN_{2}}(\Upsilon)$ for any $2 \leq s \leq 25$.

\subsection{The logarithmic regression model}
We use logarithmic regression analysis to look into the relationship between the dependent variable and one or more predictor variables in our dataset. Logarithmic regression is a nonlinear technique that modifies the dependent or predictor variables. The logarithmic regression model equation is given by 
\begin{equation*}
y=a*log(x)+b, 
\end{equation*} 
where the response variable is $y$ and the predictor variable is $x$. The regression coefficients $a$ and $b$ represent the relation between $x$ and $y$.

Here, for $1 \leq s \leq 25$, we examine the relationship between the Nirmala indices and entropy metrics of the diamond structure using a logarithmic regression analysis. The study utilized many statistical measurements, such as the squared correlation coefficient ($R^{2}$), the sum of square error (SSE), adjusted squared correlation coefficient (Adj. R-sq), root mean square error (RMSE), and squared correlation coefficient ($R^{2}$). A low RMSE value (nearer to 0) implies that the model performs well, whereas a larger $R^{2}$ value (near 1) suggests that the regression line fits the data better. In this instance, obtaining a larger $R^{2}$ value is our main goal. 

The statistics of curve fitting of the Nirmala indices versus Nirmala entropy measures for diamond structure using the logarithmic regression are shown in Table 4.
\newpage
\vspace{5mm}
\begin{table}[h!]
\centering
\renewcommand{\arraystretch}{3}
\begin{tabular}{|>{\centering\arraybackslash}m{7cm}||>{\centering\arraybackslash}m{1.8cm} |>{\centering\arraybackslash}m{1.8cm} |>{\centering\arraybackslash}m{2cm} |>{\centering\arraybackslash}m{2cm} ||} 
 
 \hline
 Model & $R^{2}$ & SSE & Adj. R-sq & RMSE
 \\ [0.5ex] 
 \hline\hline
 $ENT_{N}=0.9772*log(N)-0.8050$  &  \textbf{0.9999}   & 0.0048 & 0.9999 & 0.0144 \\
\hline
 $ENT_{IN_{1}}=0.9429*log(IN_{1})+0.1395$  & \textbf{0.9997}   & 0.0892 & 0.9991 & 0.0622  \\
\hline
 $ENT_{IN_{2}}=0.9614*log(IN_{2})+0.0083$  & \textbf{0.9996}    & 0.0406 & 0.9996  & 0.0420   \\
 \hline
\end{tabular}
\end{table} 
  
\textbf{Table 4}. Statistics of curve fitting of the Nirmala indices vs. Nirmala entropy measures of the diamond structure.

\vspace{15mm}
\begin{figure}[h!]
\centering
\includegraphics[width=150mm]{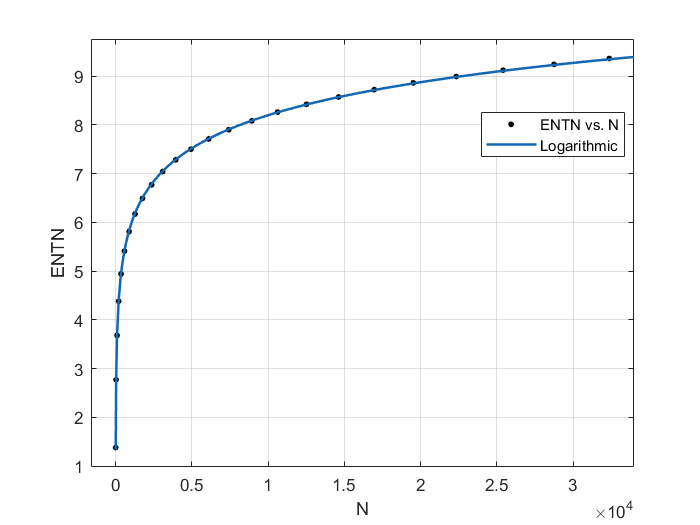}  
  \end{figure} 
 \newpage
  \begin{figure}[h!]
\centering
\includegraphics[width=148mm]{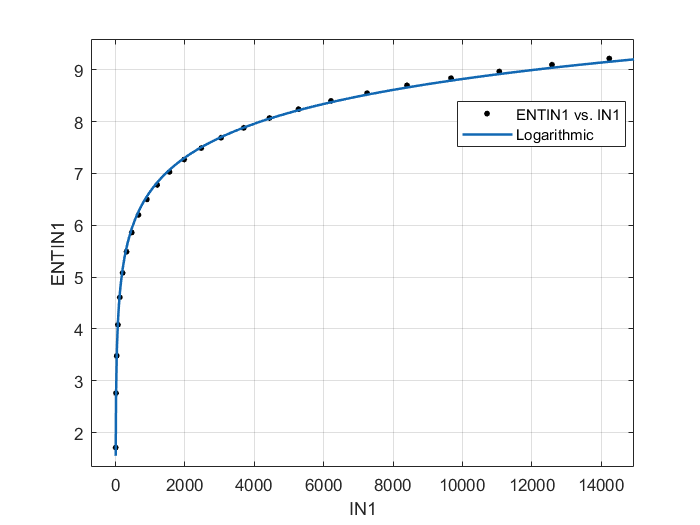}  
  \end{figure} 
 \begin{figure}[h!]
\centering
\includegraphics[width=148mm]{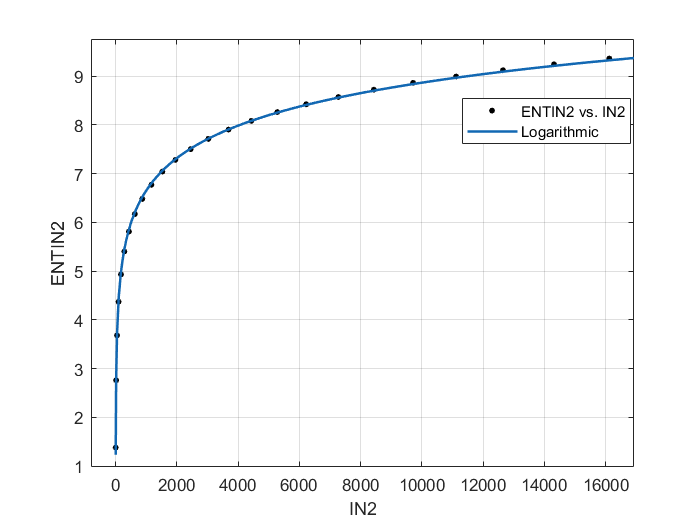}  
  \end{figure} 
  \textbf{Figure 3}. Curve fitting plots for the Nirmala indices vs. Nirmala entropy measures of the diamond structure.
  
 \section{Discussion} 
Topological indices are numerical representations of molecular groups in graph theory that help researchers better understand the molecular features and behavior of chemical and biological systems. Accurate calculation of topological indices gives researchers relevant information that enhances their knowledge of the topic. In this work, we study the so-called degree-based topological indices, or Nirmala indices, of the diamond structure. Table 3 shows that the Nirmala indices and associated entropy measures of the diamond structure increase together with an increase in $s$. Entropy metrics assess a dataset's uncertainty or information content to aid in the estimation of its complexity and distribution. These ideas are fundamental to information theory, thermodynamics, and data analysis. Accurate numerical entropy calculations provide researchers with important new information that enhances their comprehension of the network under study. Given its benefits, it supports our attention to the diamond structure's edge weight entropy. 

The social sciences, biology, and economics use the non-linear regression approach known as logarithmic regression to make predictions and account for non-linearity in data. Table 4 shows the statistical measures used in the study, including the sum of square errors (SSE), adjusted squared correlation coefficient (Adj. R-sq), squared correlation coefficient ($R^{2}$), and root mean square error (RMSE). A higher $R^{2}$ value, which is closer to 1, indicates a regression line that fits the data more accurately. Figure 3 illustrates how well the Nirmala indices and the associated entropy measure values of the diamond structure seem to fit the curve.
\section{Conclusion}
The definitions of the Nirmala indices and entropy measures based on the Nirmala indices have been used in this research work. The Nirmala indices of the diamond structure have been mathematically formulated. Its M-polynomial has been used to analyze the Nirmala indices-based entropy measures of this structure. Additionally, the Nirmala indices and related entropy measures have been numerically computed and compared them using 2D line plots. The Nirmala indices and corresponding entropy measures of the diamond structure increase as $s$ increases, according to Table 3 and Figure 3. 

For the molecular graph $\Upsilon$ diamond structure, we have the following inequality relationships: \\\\
(i) $IN_2(\Upsilon) < IN_1(\Upsilon) < N(\Upsilon)$ for $s=1$; \\\\ 
(ii) $IN_1(\Upsilon) < IN_2(\Upsilon) < N(\Upsilon)$ for any $2 \leq s \leq 25$. \\\\
(iii) $ENT_{N}(\Upsilon) \approx ENT_{IN_{2}}(\Upsilon)$ for any $1 \leq s \leq 25$; \\\\
(iv) $ENT_{N}(\Upsilon) < ENT_{IN_{1}}(\Upsilon)$ and $ENT_{IN_{2}}(\Upsilon) <  ENT_{IN_{1}}(\Upsilon)$ for $s=1$; \\\\
(v) $ENT_{IN_{1}}(\Upsilon) < ENT_{N}(\Upsilon)$ and $ENT_{IN_{1}}(\Upsilon) <  ENT_{IN_{2}}(\Upsilon)$ for any $2 \leq s \leq 25$. \\\\
The findings of this study will be useful in examining the topology and structural properties of the diamond structure in the domains of electronic, mechanical, optical, and nanoelectronic technology.

\section*{Funding} No funding is available for this study.
\section*{Author contributions} All authors contributed equally.
\section*{Data Availability Statement}
This manuscript has no associated data.
\section*{Declarations}
\textbf{Conflict of interest} The authors declare that they have no known competing financial interests or personal relationships that could have appeared to influence the work reported in this paper.
  \makeatletter
\renewcommand{\@biblabel}[1]{[#1]\hfill}
\newpage
\makeatother


\begin{thebibliography}{99}
\bibitem{1} Wagner and H. Wang, \emph{Introduction to Chemical Graph Theory}, Discrete Mathematics and its Applications, CRC Press, 2019. 
\bibitem{2} S. Das, S. Rai and V. Kumar, On topological indices of Molnupiravir and its QSPR modeling with some other antiviral drugs to treat COVID-19 patients, \emph{J. Math. Chem}. (2023) 1-44. 
\bibitem{3} N. Trinajstic, \emph{Chemical Graph Theory}, CRC press, 2018.
\bibitem{4} H. Deng, J. Yang and F. Xia, A general modeling of some vertex-degree based topological indices in benzenoid systems and phenylenes, \emph{Comput. Math. with Appl}. \textbf{61} (2011) 3017-3023.
\bibitem{5} H. Wiener, Structural determination of paraffin boiling points, \emph{J. Am. Chem. Soc}. \textbf{69} (1947) 17-20.
\bibitem{6} M. Randi\'c, Characterization of molecular branching, \emph{J. Am. Chem. Soc}. \textbf{97} (1975) 6609-6615. 
\bibitem{7} S. Das, S. Rai and M. Mandal, M-polynomial and correlated topological indices of antiviral drug Molnupiravir used as a therapy for COVID-19, \emph{Polycycl. Aromat. Compd}. \textbf{43} (2023) 7027-7041.
\bibitem{8} I. Gutman and N. Trinajsti\'c, Graph theory and molecular orbitals. Total $\pi$-electron energy of alternant hydrocarbons, \emph{Chem. Phys. Lett}. \textbf{17} (1972) 535-538.
\bibitem{9} V. R. Kulli, Nirmala index, \emph{Int. J. Math. Trends Technol}. \textbf{67} (2021) 8-12. 
\bibitem{10} V. R. Kulli, V. Lokesha and K. Nirupadi, Computation of inverse Nirmala indices of certain nanostructures, \emph{International J. Math. Combin}. \textbf{2} (2021) 33-40.
\bibitem{11} H. Hosoya, On some counting polynomials in chemistry, \emph{Discrete Appl. Math}. \textbf{19} (1988) 239-257. 
\bibitem{12} A. Verma, S. Mondal, N. De and A. Pal, Topological properties of bismuth tri-iodide using neighborhood M-polynomial, \emph{Int. J. Math. Trends Technol}. \textbf{67} (2019) 83-90. 
\bibitem{13} E. Deutsch and S. Klav\'ar, M-polynomial and degree-based topological indices, \emph{Iran. J. Math. Chem}., \textbf{6} (2015) 93-102. 
\bibitem{14} S. Das and S. Rai, M-polynomial and related degree-based topological indices of the third type of chain Hex-derived network, \emph{Malaya J. Mat}. \textbf{8} (2020) 1842-1850. 
\bibitem{15} S. Das and V. Kumar, On M-polynomial of the two-dimensional silicon-carbons, \emph{Palest. J. Math}. \textbf{11} (2022) 136-157.
\bibitem{16} S. Das and V. Kumar, Investigation of closed derivation formulas for GQ and QG indices of a graph via M-polynomial, \emph{Iran. J. Math. Chem}. \textbf{13} (2022) 129-144.
\bibitem{17} S. Das and S. Rai,  On closed derivation formulas of Nirmala indices from the M-polynomial of a graph, \emph{J. Indian Chem. Soc}. \textbf{100} (2023) p. 101017.  
\bibitem{18} Y. C. Kwun, M. Munir, W. Nazeer, S. Rafique and S. M. Kang, M-polynomials and topological indices of V-Phenylenic nanotubes and nanotori,\emph{ Sci. Rep}. \textbf{7} (2017) p. 8756. 
\bibitem{19} M. Munir, W. Nazeer, S. Rafique and S. M. Kang, M-polynomial and 
related topological indices of nanostar dendrimers, \emph{Symmetry} \textbf{8} (2016) p. 97. 
\bibitem{20} S. Das and S. Rai, M-polynomial and related degree-based topological indices of the third type of Hex-derived network, \emph{Nanosyst.: Phys. Chem. Math}. \textbf{11} (2020) 267-274. 
\bibitem{21} C. E. Shannon, A mathematical theory of communication, \emph{Bell Syst. Tech. J}. \textbf{27} (1948) 379-423. 
\bibitem{22} S. Manzoor, M.K. Siddiqui and S. Ahmad, On entropy measures of molecular graphs using topological indices, \emph{Arab. J. Chem}. \textbf{13} (2020) 6285-6298. 
\bibitem{23} S. Manzoor, M.K. Siddiqui and S. Ahmad, On entropy measures of polycyclic hydroxy-chloroquine used for novel coronavirus (COVID-19) treatment, \emph{Polycycl. Aromat. Compd}. \textbf{42} (2022) 2947-2969. 
\bibitem{24} X. L. Wang, M. K. Siddiqui, S. A. K. Kirmani, S. Manzoor, S. Ahmad and M. Dhlamini, On topological analysis of entropy measures for silicon carbides networks, \emph{Complexity}. \textbf{2021} (2021) 1-26. 
\bibitem{25} R. Huang, M. K. Siddiqui, S. Manzoor, S. Khalid and S. Almotairi, On physical analysis of topological indices via curve fitting for the natural polymer of cellulose network, \emph{Eur. Phys. J. Plus} \textbf{137} (2022) 1-17. 
\bibitem{26} Z. Chen, M. Dehmer and Y. Shi, A note on distance-based graph entropies, \emph{Entropy} \textbf{16} (2014) 5416-5427. 
\bibitem{27} V. Kumar and S. Das, On Nirmala Indices based Entropy Measures of Silicon Carbide Network, \emph{Iran. J. Math. Chem}. \textbf{14} (2023), pp. 271-288. 
\bibitem{28} F. Ishfaq, M. F. Nadeem and Zeinhom M.El-Bahy, On topological indices and entropies of diamond structure, \emph{Int J Quantum Chem}. (2023) 123:e27207.  
\end{thebibliography}
\end{document}